\newcommand{\es}{\varnothing}
\DeclareMathOperator{\size}{size}
\title{{\sc On Complexities of Minus Domination}}
\author{
 Lu\'erbio~Faria\inst{1}
\and 
 Wing-Kai~Hon\inst{2}
\and
 Ton~Kloks
\and 
 Hsiang-Hsuan~Liu\inst{2}
\and
 Tao-Ming~Wang\inst{3}
\and
 Yue-Li~Wang\inst{4}} 
\institute{
 Instituto de Matem\'atica e Estat\'istica\\
 Universidade do Estado do Rio de Janeiro, Brazil\\
 {\tt luerbio@cos.ufrj.br} 
\and
 Department of Computer Science\\
 National Tsing Hua University, Taiwan\\
 {\tt (wkhon,hhliu)@cs.nthu.edu.tw} 
\and 
 Department of Applied Mathematics\\
 Tunghai University, Taichung, Taiwan\\
 {\tt wang@go.thu.edu.tw}
\and 
 Department of Information Management\\ 
 National Taiwan University of Science and Technology\\ 
 {\tt ylwang@cs.ntust.edu.tw}
}
\begin{document}

\maketitle

\begin{abstract}
A function $f: V \rightarrow \{-1,0,1\}$ 
is a minus-domination function of a graph $G=(V,E)$ 
if the values over the vertices in each closed neighborhood 
sum to a positive number. The weight of $f$ is the sum 
of $f(x)$ over all vertices $x \in V$.  
The minus-domination number $\gamma^{-}(G)$ is the 
minimum weight over all minus-domination functions. 
The size of a minus domination is the 
number of vertices that are assigned $1$. 
In this paper we show that the minus-domination problem 
is fixed-parameter tractable for $d$-degenerate graphs when parameterized 
by the size of the minus-dominating set and by $d$. 
The minus-domination 
problem  
is polynomial for graphs of bounded 
rankwidth and for 
strongly chordal graphs. It is NP-complete for splitgraphs. 
Unless $P=NP$ there is no fixed-parameter algorithm for 
minus-domination.    
\end{abstract}

\section{Introduction}
\label{section intro}
%%%%%%%%%%%%%%%%%%%%%%

A fresh breeze seems to be blowing through the area of domination 
problems. This research area is aroused anew 
by the recent fixed-parameter 
investigations (see, 
eg,~\cite{kn:alon,kn:fomin2,kn:zheng2,kn:zheng}). 

\bigskip 

Let $G=(V,E)$ be a graph and let $f: V \rightarrow S$ be a 
function that assigns some integer from $S \subseteq \mathbb{Z}$ 
to every vertex of $G$. For a subset $W \subseteq V$ we 
write 
\[f(W)=\sum_{x \in W}\; f(x).\] 
The function $f$ is 
a domination function if 
for every vertex $x$, $f(N[x]) > 0$, where $N[x]=\{x\} \cup N(x)$ 
is the closed neighborhood of $x$.
The \underline{weight 
of $f$} is defined as the value $f(V)$. 

\bigskip 
 
In this manner, the ordinary domination problem is described 
by a domination function that assigns a value of $\{0,1\}$ 
to each element of $V$. A signed domination function 
assigns a value of $\{-1,1\}$ to each vertex $x$. The minimal 
weight of a dominating and signed dominating function 
are denoted by $\gamma(G)$ and $\gamma_s(G)$. 
In this paper we look at the minus-domination problem. 

\begin{definition}
Let $G=(V,E)$ be a graph. A function $f: V \rightarrow \{-1,0,1\}$ 
is a minus-domination function if $f(N[x]) > 0$ for every vertex 
$x$. 
\end{definition}

\bigskip 

In the minus-domination problem one tries to 
minimize the weight of a minus-domination 
function. The minimal weight of a minus-domination 
function is denoted as $\gamma^{-}(G)$. Notice that the 
weight may be negative. For example, consider a $K_4$ 
and add one new vertex for every edge, adjacent to the 
endpoints of that edge. Assign a value $1$ to every vertex of 
the $K_4$ and assign a value $-1$ to each of the six other vertices. 
This is a valid signed-domination function and its weight 
is $-2$. 

\bigskip 
 
The problem to determine the value of 
$\gamma^{-}(G)$ is NP-complete, even when restricted to 
bipartite graphs, chordal graphs and planar graphs with maximal degree 
four~\cite{kn:damaschke,kn:dunbar3}. 
Sharp bounds for the 
minimum weight are obtained in, eg,~\cite{kn:matousek}. 

Damaschke shows that, unless $P=NP$, 
the value of $\gamma^{-}$ cannot 
be approximated in polynomial time 
within a factor $1+\epsilon$, for some $\epsilon >0$, 
not even for graphs with all degrees at most 
four~\cite[Theorem 3]{kn:damaschke}. 

Famous open problems are the complexity of the minus-domination 
problem for splitgraphs and for strongly chordal graphs. 
In this paper we settle these questions. 

\section{Planar graphs}
\label{section planar}
%%%%%%%%%%%%%%%%%%%%%%%

Determining the smallest weight of a minus-dominating function 
is NP-complete, even when restricted to 
planar graphs~\cite{kn:dunbar3}. 

\bigskip 
 
Let $G=(V,E)$ be a graph and let $f:V \rightarrow S$ 
be a domination function. 
Following Zheng et al. we define the \underline{size of $f$} 
as the number of vertices $x \in V$ with $f(x) > 0$. We denote the 
size of a minus-dominating function $f$ as $\size(f)$.  

\bigskip 

Consider signed-domination functions of size at most $k$. 
It is easy to see that $|V(G)| = O(k^2)$ (see~\cite{kn:zheng2}).  
It follows that the signed domination problem parameterized 
by the size is fixed-parameter tractable. This is not so 
clear for the minus domination problem. For example, consider a 
star and assign to the center a value of $1$ and to 
every leaf a value of zero. 
This is a valid minus-domination function with size $1$ but the 
number of vertices is unbounded. 

\bigskip 

\begin{theorem}
For planar graphs the minus-domination problem,  
parameterized by the size, is fixed-parameter tractable. 
\end{theorem}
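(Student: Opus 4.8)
The plan is to combine two facts: a domination lower bound that forces the graph to have small treewidth, and a boundedness property of the closed‑neighborhood sums that keeps the ensuing dynamic program fixed‑parameter tractable rather than merely slicewise polynomial. I would first record the elementary structural facts. Write $P=\{x:f(x)=1\}$, $N=\{x:f(x)=-1\}$ and $Z=\{x:f(x)=0\}$, so that $\size(f)=|P|$. Since $f(N[x])>0$ and $f$ takes values in $\{-1,0,1\}$, every closed neighborhood must contain at least one vertex of $P$; hence $P$ is a dominating set with $|P|=\size(f)\le k$, and if $\gamma(G)>k$ we may answer \emph{no} at once. Moreover, from $|N[x]\cap P|-|N[x]\cap N|>0$ and $|N[x]\cap P|\le|P|\le k$ we get $|N[x]\cap N|\le k-1$ for \emph{every} vertex $x$; together with $|N[x]\cap P|\le k$ this shows that $f(N[x])$, and indeed every partial sum of the $f$‑values over $N[x]$, lies in the interval $[-k,k]$. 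This last fact is what rescues the running time below.

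Next I would invoke bidimensionality. Because $\gamma(G)\le k$ is necessary for a yes‑instance, and the domination number of a planar graph satisfies the standard relation $\mathrm{tw}(G)=O(\sqrt{\gamma(G)})$, on every yes‑instance we have $\mathrm{tw}(G)=O(\sqrt{k})$; otherwise a large grid minor witnesses $\gamma(G)>k$ and we again answer \emph{no}. So we may compute a tree decomposition of width $w=O(\sqrt{k})$ and proceed by dynamic programming over it.

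The dynamic program is the usual closed‑neighborhood (``domination‑style'') DP over a nice tree decomposition, adapted to three labels. A state at a bag records, for each bag vertex, its label in $\{-1,0,1\}$ together with the running sum of the $f$‑values already accumulated in its closed neighborhood, plus two global counters: the number of vertices placed in $P$ (capped at $k$) and the current weight. The constraint $f(N[x])\ge 1$ is verified at the moment $x$ is forgotten, and any branch in which a vertex can no longer reach a positive closed‑neighborhood sum is pruned. By the bound from the first step each running sum ranges only over $[-k,k]$, so there are $O(k)$ choices per bag vertex; the state space therefore has size $(O(k))^{w+1}\cdot(k+1)\cdot O(k^2)=k^{O(\sqrt{k})}$, and the whole computation runs in time $k^{O(\sqrt{k})}\cdot n^{O(1)}$, which is fixed‑parameter (indeed subexponential) in $k$.

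The main obstacle, and the only place where minus‑domination behaves differently from ordinary or signed domination, is controlling the accumulated closed‑neighborhood sums inside the DP. A priori these sums range over $\Theta(n)$ values — a star already shows that one vertex may have arbitrarily many neighbors — which would yield only an $n^{O(\sqrt{k})}$, slicewise‑polynomial algorithm. The crux is therefore the uniform bound $|N[x]\cap N|\le k-1$: it caps the number of nonzero neighbors of every vertex at $2k$, confines all partial sums to $[-k,k]$, and thereby collapses the state space to a function of $k$ alone. Checking that this bound is maintained through the introduce/forget/join updates — in particular, avoiding double counting a vertex's own label at join nodes — is the routine but careful remaining part of the argument.
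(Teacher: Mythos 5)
Your proof is correct, and its first half coincides with the paper's: both observe that the vertices assigned $1$ form a dominating set, so a yes-instance has $\gamma(G)\leq k$, and both then invoke the bound of Alber et al.\ that planar graphs with domination number at most $k$ have treewidth $O(\sqrt{k})$ (your appeal to bidimensionality is the same fact in modern clothing, though strictly speaking dominating set is \emph{contraction}-bidimensional, so the witness is a partially triangulated grid contraction rather than a grid minor). The two proofs part ways at the bounded-treewidth stage: the paper formulates minus-domination of size at most $k$ in monadic second-order logic and finishes with Courcelle's theorem, whereas you run an explicit dynamic program. Your key observation --- that any solution of size at most $k$ has $|N[x]\cap N|\leq k-1$ for every $x$, so all running neighborhood sums can be confined to $[-k,k]$ --- is exactly what the paper leaves implicit; indeed the MSO formulation is only possible for fixed $k$ because both neighborhood counts are bounded by $k$. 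What each buys: Courcelle gives a one-line finish with linear dependence on $n$ but an unspecified, potentially enormous dependence on $k$; your DP gives the concrete subexponential bound $k^{O(\sqrt{k})}\cdot n^{O(1)}$. One small repair: carrying the current weight as a state coordinate with range $O(k^2)$ needs a bound on the number of vertices assigned $-1$; this does follow from your own lemma (each needs two neighbors assigned $1$, and each vertex assigned $1$ has at most $k-1$ neighbors assigned $-1$, as in Lemma~2 of the paper), but it is cleaner to store the minimum attainable weight per state rather than make it a dimension.
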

\begin{proof}
Let $f: V \rightarrow \{-1,0,1\}$ be a minus-domination 
function. Let 
\[D=\{\;x\;|\; x \in V \quad\text{and}\quad f(x)=1\;\}.\] 
Then $D$ is a dominating set in $G$. 
It follows that, for all graphs $G$,  
\[\gamma^{-}(G) \leq \gamma(G) 
\leq \min\;\{\;\size(f)\;|\; \text{f is a minus-dominating function}\;\}.\] 

\medskip 

\noindent
The first subexponential fixed-parameter algorithm for 
domination in planar graphs appeared in~\cite{kn:alber}. 
In this paper the authors prove that, 
if $G$ is a planar graph with $\gamma(G) \leq k$, then the 
treewidth of $G$ is $O(\sqrt{k})$. Using a treedecomposition 
of bounded treewidth one can solve the domination problem 
in $O(2^{15.13 \sqrt{k}}\cdot k + n^3+k^4)$ time 
(or conclude that $\gamma(G) > k$). 
The results were generalized to some nonplanar classes of graphs 
by Demaine, et al. 

\medskip  

\noindent
The minus-domination problem with size bounded by $k$ can be 
formulated in monadic second-order logic.  
By Courcelle's theorem, any such problem can be solved in linear time 
on graphs of bounded treewidth 
(see, eg,~\cite{kn:kloks,kn:kloks2}).  
This proves the theorem.
\qed\end{proof}

\subsection{$d$-Degenerate graphs}
%%%%%%%%%%%%%%%%%%%%%%%%%%%%%%%%%%

\begin{definition}
A graph is $d$-degenerate if each of its induced subgraphs 
has a vertex of degree at most $d$. 
\end{definition}

Graphs with bounded degeneracy contain, eg, graphs that are 
embeddable on some fixed surface, families of graphs that 
exclude some minor, graphs of bounded treewidth, etc.~\cite{kn:thomason}. 

\bigskip 

In this section we show that, for each fixed $d$, 
the minus domination problem, 
parameterized by the number $k$ of vertices that receive a 
$1$, is fixed-parameter tractable for $d$-degenerate graphs. 

\bigskip 

In this section, when considering 
a partition of the vertices, we allow 
that some parts of the partition are empty. 

\bigskip 
 
In the minus domination problem one searches for a 
partition of the vertices into three parts, say red, white 
and blue. The red vertices are assigned $-1$, white are $1$ 
and blue are $0$. 
Zheng et al. proved a lemma similar to the one below for 
the signed domination problem in~\cite[Theorem 2]{kn:zheng} 
and~\cite[Lemma 6]{kn:zheng2}. 

\begin{lemma}
\label{lm zheng}
Assume that $G=(V,E)$ has a minus-dominating function with 
size at most $k$. 
Let $R$, $W$ and $B$ be the coloring 
of the vertices into red, white and blue, defined by this 
minus-domination function. Then 
\[|W \cup R| =O(k^2).\]
\end{lemma}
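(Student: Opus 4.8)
The plan is to observe first that the size bound makes $|W|\le k$ immediate, since the size of $f$ counts exactly the vertices with $f(x)>0$, which for values in $\{-1,0,1\}$ are precisely the white vertices. Hence the statement reduces to showing that the number of red vertices satisfies $|R|=O(k^2)$. I would obtain this by double counting the edges running between $R$ and $W$; writing $e(R,W)$ for this number, the aim is a lower bound in terms of $|R|$ and an upper bound in terms of $|W|$, after which the two estimates together pin down $|R|$.

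For the lower bound I would use the domination constraint at each red vertex. Fix $r\in R$. Since $f$ is integer valued, $f(N[r])\ge 1$, and as $f(r)=-1$ this forces $\sum_{y\in N(r)}f(y)\ge 2$. Writing $w(r)$ and $\rho(r)$ for the numbers of white and red neighbours of $r$, the blue neighbours contribute nothing, so $w(r)-\rho(r)\ge 2$ and in particular $w(r)\ge 2$. Summing over all red vertices gives $e(R,W)=\sum_{r\in R}w(r)\ge 2|R|$.

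For the upper bound I would turn to the constraint at each white vertex, and this is the step I expect to carry the real content. Fix $w\in W$. Here $f(w)=1$, so $f(N[w])\ge 1$ yields $\sum_{y\in N(w)}f(y)\ge 0$, i.e. $w$ has at least as many white neighbours as red neighbours. Since $w$ can have at most $|W|-1\le k-1$ white neighbours, it has at most $k-1$ red neighbours. Summing over white vertices gives $e(R,W)=\sum_{w\in W}(\text{number of red neighbours of }w)\le |W|\,(k-1)\le k(k-1)$.

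Combining the two bounds yields $2|R|\le k(k-1)$, hence $|R|\le\binom{k}{2}$ and $|W\cup R|\le k+\binom{k}{2}=O(k^2)$. The crux is the white side bound: for ordinary domination a white (dominating) vertex could be adjacent to arbitrarily many other vertices, but the minus condition \emph{at the white vertex itself} caps its red neighbourhood by its white neighbourhood, and the latter is controlled by the size $k$. This local self restriction, which is absent in the $\{0,1\}$ setting, is exactly the structural feature that makes $|W\cup R|$ bounded rather than merely $|W|$.
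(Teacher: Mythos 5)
Your proposal is correct and follows essentially the same route as the paper: both arguments double-count the edges between $R$ and $W$, using the constraint at red vertices to show each has at least two white neighbours (giving $e(R,W)\ge 2|R|$) and the constraint at white vertices to bound their red degree in terms of the at most $k$ white vertices (giving $e(R,W)=O(k^2)$). Your version is just slightly sharper in the constants, yielding $|R|\le\binom{k}{2}$ instead of the paper's $|R|<2k^2$.
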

\begin{proof}
By assumption, the minus-domination function colors 
at most $k$ vertices white. Consider the subgraph $G^{\prime}$ 
induced 
by the red and white vertices. Consider a vertex $x$ of 
$G^{\prime}$. Then at least half of its neighbors is 
colored white, otherwise its closed neighborhood 
has weight at most zero. Since there are at most $k$ white 
vertices, each vertex of $G^{\prime}$ has degree 
less than $2k$. 

\medskip 

\noindent
Notice also that each red vertex has at 
least two white neighbors. Since there are only $k$ white 
vertices, and each white vertex has degree less than $2k$, 
the number of red vertices is less than $2k^2$. 
This proves the lemma. 
\qed\end{proof}
 
\bigskip 

For algorithmic purposes one usually considers the 
following generalization of the domination problem. 
Consider graphs of which each vertex is either colored black 
or white. In the parameterized black-and-white domination 
problem the objective is to find a set $D$ of at most $k$ vertices 
such that 
\[\boxed{\text{for each black vertex $x$, $N[x] \cap D \neq \es$.}}\]  
Obviously, the domination problem is a special case, in 
which each vertex is black. 

\bigskip 

For the minus-domination problem we describe 
an algorithm for a black-and-white 
version, where the vertices with a $0$ or $-1$ are black 
and such that each closed neighborhood of a black vertex 
has a positive weight. 
To see that this solves the minus-domination problem, 
just consider the case where all vertices are black. 

\bigskip 

Alon and Gutner prove, in their seminal paper,  
that the domination problem is fixed-parameter 
tractable for $d$-degenerate graphs~\cite{kn:alon}. 
The main ingredient of their paper is the following lemma. 

\begin{lemma}
\label{lm alon}
Let $G=(V,E)$ be a $d$-degenerate black-and-white colored graph. 
Let $B$ and $W$ be the set of black and white vertices. 
If $|B| > (4d+2)k$ then the set 
\[\Omega=\left\{\;x\; \mid \; x \in V \quad\text{and}\quad 
|N[x] \cap B| \geq \frac{|B|}{k}\;\right\} 
\quad\text{satisfies}\quad |\Omega| \leq (4d+2)k.\]  
\end{lemma}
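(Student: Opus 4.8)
The plan is to bound $|\Omega|$ by a double-counting argument on the incidences between $\Omega$ and the black set $B$, using the degeneracy of $G$ in the form of an orientation of bounded out-degree. First I would fix such an orientation: since $G$ is $d$-degenerate, its vertices admit an ordering in which every vertex has at most $d$ neighbours appearing later, so orienting each edge towards its later endpoint produces a directed graph in which every vertex has out-degree at most $d$. Write $N^{+}(x)$ and $N^{-}(x)$ for the out- and in-neighbours of $x$.

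The quantity to estimate is the incidence count $I=\sum_{x\in\Omega}|N[x]\cap B|$. On the one hand, the defining property of $\Omega$ gives the lower bound $I\ge |\Omega|\cdot|B|/k$, since each $x\in\Omega$ contributes at least $|B|/k$. On the other hand I would bound $I$ from above by splitting each closed neighbourhood into three pieces: the vertex $x$ itself, its out-neighbours, and its in-neighbours. The term counting the vertices themselves is at most $|\Omega\cap B|\le|\Omega|$. The out-neighbour term is $\sum_{x\in\Omega}|N^{+}(x)\cap B|\le d|\Omega|$ by the out-degree bound. The crucial step is the in-neighbour term: an in-neighbour $b\in B$ of some $x\in\Omega$ is exactly an out-neighbour of $b$ lying in $\Omega$, so reindexing over $B$ gives $\sum_{x\in\Omega}|N^{-}(x)\cap B|=\sum_{b\in B}|N^{+}(b)\cap\Omega|\le d|B|$, again by the out-degree bound. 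Combining these, $I\le(d+1)|\Omega|+d|B|$.

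Putting the two estimates together yields $|\Omega|\,(|B|/k-(d+1))\le d|B|$. The hypothesis $|B|>(4d+2)k$ makes the bracketed factor strictly positive, so one may divide and obtain $|\Omega|\le dk|B|/(|B|-(d+1)k)$. It then remains to verify that this fraction is at most $(4d+2)k$, which after clearing denominators reduces to the inequality $(4d+2)(d+1)k\le(3d+2)|B|$; this follows from $|B|>(4d+2)k$ together with $3d+2\ge d+1$.

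The step I expect to require the most care is the in-neighbour reindexing, because one must resist the temptation to bound the edges between $\Omega$ and $B$ by the total number of edges of the induced subgraph on $\Omega\cup B$ — that cruder estimate charges all edges inside $B$ and only yields $|\Omega|=O(|B|)$, which is far too weak. Routing the edges from $B$ into $\Omega$ through the out-degree bound on $B$ rather than on $\Omega$ is precisely what keeps the right-hand side linear in $d$ and produces the sharp constant $4d+2$; the closing arithmetic is then elementary.
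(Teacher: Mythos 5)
Your proof is correct, but note that the paper itself offers no proof of this lemma: it is stated as the main ingredient of Alon and Gutner's work and simply cited to their paper, so the only argument to compare against is the original one. That original argument is essentially the ``crude'' route you dismiss in your closing remark: Alon and Gutner count the edges of the induced subgraph on $\Omega\cup B$ undirected, using the fact that a $d$-degenerate graph on $m$ vertices has at most $dm$ edges, and work from $\tfrac{1}{2}|\Omega|\bigl(|B|/k-1\bigr)\le d\bigl(|\Omega|+|B|\bigr)$. Your caveat about that estimate is in fact too pessimistic: it only degrades to $|\Omega|=O(|B|)$ if one substitutes the lower bound $|B|/k>4d+2$ prematurely; keeping $|B|/k$ as a variable it gives $|\Omega|\le 2dk|B|/\bigl(|B|-(2d+1)k\bigr)$, which is decreasing in $|B|$ and hence at most $4dk<(4d+2)k$ whenever $|B|>(4d+2)k$ --- so the undirected count suffices and even yields a slightly sharper constant. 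What your orientation-based ledger buys instead is a direct (non-contradiction) derivation with no factor-two bookkeeping for doubly counted edges, and it makes explicit the one structurally important move, namely charging the edges from $B$ into $\Omega$ to the out-degrees of $B$ rather than to $\Omega$. The steps of your argument all check out: the out-degree-$d$ orientation exists for $d$-degenerate graphs, the split of $N[x]\cap B$ into $\{x\}$, out-neighbours and in-neighbours is a disjoint decomposition, the reindexing $\sum_{x\in\Omega}|N^{-}(x)\cap B|=\sum_{b\in B}|N^{+}(b)\cap\Omega|\le d|B|$ is sound, and the final reduction to $(4d+2)(d+1)k\le(3d+2)|B|$, which follows from $|B|>(4d+2)k$ and $d+1\le 3d+2$, is correct.
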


\bigskip 

To prove that the minus-domination problem, parameterized 
by the size, is fixed-parameter tractable for $d$-degenerate 
graphs, we adapt the proof of~\cite[Theorem 1]{kn:alon}. 

\begin{theorem}
For each $d$ and $k$, there exists a linear  
algorithm for finding a minus-domination of size at most $k$ 
in a $d$-degenerate black-and-white graph, if such a set exists. 
\end{theorem}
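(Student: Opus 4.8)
The plan is to adapt the branching algorithm of Alon and Gutner behind Lemma~\ref{lm alon}. The first step is to pin down what the white vertices must do. In any minus-domination let $W$, $R$, $B$ be the vertices receiving $1$, $-1$, $0$. Since the red contribution to any closed neighbourhood is non-positive, a black vertex $x$ with $f(N[x])>0$ must have a white vertex in $N[x]$; hence $W$ dominates every black vertex and $|W|\le k$. Conversely, given any set $W$ with $|W|\le k$ dominating all black vertices, assigning $1$ on $W$ and $0$ everywhere else already yields a valid minus-domination, because then $f(N[x])=|W\cap N[x]|\ge 1>0$ at every black $x$. Thus a black-and-white minus-domination of size at most $k$ exists if and only if the black vertices admit a dominating set of size at most $k$, and such a set can be reported as the white part of the function.

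The second step is to find this dominating set by Alon--Gutner branching. While the number of not-yet-dominated black vertices exceeds $(4d+2)k$, Lemma~\ref{lm alon} yields a set $\Omega$ with $|\Omega|\le(4d+2)k$; since the at most $k$ vertices of an admissible $W$ together cover all black vertices, one of them covers at least a $1/k$ fraction of them and therefore lies in $\Omega$. I would branch over the at most $(4d+2)k$ members of $\Omega$, place the chosen vertex into $W$, delete the black vertices it dominates, and recurse with the budget decreased by one. The search tree has depth at most $k$ and branching at most $(4d+2)k$, hence at most $((4d+2)k)^{k}$ leaves; once at most $(4d+2)k$ black vertices remain, the residual instance depends only on $d$ and $k$ and is solved directly. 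Because degeneracy lets one compute the relevant degrees and the set $\Omega$ in linear time and each node of the tree inspects the graph only a bounded number of times, the whole procedure runs in $f(d,k)\cdot n$ time, which is linear for fixed $d$ and $k$.

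The step I expect to be most delicate is precisely the deletion of a dominated black vertex in the branching. For ordinary domination, and for the existence form above where all non-white vertices are blue, a black vertex is satisfied the instant it acquires one white neighbour, so deleting it is sound. This ceases to hold once one wishes to place red vertices --- for instance to minimise the weight $|W|-|R|$ rather than merely the size --- since a black vertex must then retain more white than red neighbours and a single white neighbour need not suffice. Handling that refinement is where Lemma~\ref{lm zheng} enters: it guarantees $|W\cup R|=O(k^2)$, so fewer than $2k^2$ red vertices ever occur, and the branching above can be combined with this quadratic bound to reduce the instance to a kernel whose size is a function of $d$ and $k$ alone, on which the red/blue completion is found by exhaustive search. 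Making this combination rigorous --- in particular verifying that no black constraint is silently discarded when a dominated black vertex is deleted --- is the part of the argument that requires the most care.
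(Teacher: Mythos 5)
Your proof is correct, but it takes a genuinely different --- and simpler --- route than the paper's. Your key move is the opening equivalence: since only the number of $1$'s is parameterized, the value $-1$ is never needed for mere existence, so a black-and-white minus-domination of size at most $k$ exists if and only if the black vertices admit a dominating set of size at most $k$ (put $1$ on the dominating set, $0$ everywhere else). This collapses the theorem to black-and-white dominating set on $d$-degenerate graphs, which is precisely the problem Alon and Gutner solve in $f(d,k)\cdot n$ time, so you could invoke their result as a black box rather than re-run the $\Omega$-branching; doing so also repairs the one loose spot in your write-up, namely the claim that the residual instance after branching ``depends only on $d$ and $k$'' --- it does not (the graph still has $n$ vertices), and one needs their kernelization, or the equivalence-class argument the paper uses, to finish. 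The paper never makes your reduction explicit: it adapts the Alon--Gutner algorithm internally, with the same branching on $\Omega$ when the black set is large, but with a base case that enumerates every candidate red set $R\subseteq B$, collapses vertices with identical traces on the closed neighborhoods, and brute-forces the placement of the $1$'s. That extra machinery aims at something stronger than the statement requires (recovering solutions that actually use $-1$'s, as one would want when minimizing weight subject to bounded size), and it runs into exactly the soundness issue you isolate in your final paragraph: once red vertices may be placed later, a black vertex cannot simply be discarded the moment it acquires a $1$-neighbor, and the paper's branching phase glosses over this point. Your reduction sidesteps the issue entirely by never assigning $-1$. One caveat: under the paper's literal formulation the vertices assigned $0$ or $-1$ must be black, so white vertices are forced to take the value $1$; your equivalence then needs the dominating set to contain all white vertices --- a one-line adjustment that keeps the problem an instance of black-and-white domination, so your argument survives under either reading.
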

\begin{proof}
Let $B$ and $W$ be the set of black and white vertices. 
First assume that $|B| \leq (4d+2)k$. If there is a 
minus-domination function of size at most $k$ then there 
are $k$ vertices (assigned $1$) that dominate all vertices in $B$. 

\medskip 

\noindent
The algorithm tries all possible subsets $R \subseteq B$ 
for the set of red vertices (those are assigned $-1$). 
Number the closed neighborhoods of the vertices in $R \cup B$, 
say 
\[N_1, \dots,N_t,\] 
where $t = |B \cup R| \leq (4d+2)k$. 
Define an equivalence relation on the vertices of 
$V \setminus R$ by making two vertices equivalent if 
they are contained in exactly the same subsets $N_i$. 
For each equivalence class that contains more than 
$k$ vertices which are not red, remove all of them except 
at most $k$ vertices. 
This kernelization reduces the graph to an instance $H$ 
with at most $g(k,d)$ vertices, for some function $g$. 
 
\medskip 

\noindent
Consider all subsets of $V(H)$ with at most $k$ 
vertices of which none is red. Give these vertices 
the value $1$ and the remaining vertices that are not red the 
value $0$. Check if this is a valid minus-domination. 

\medskip 

\noindent 
Now assume that $|B| > (4d+2)k$. Then, by Lemma~\ref{lm alon},  
$|\Omega| \leq (4d+2)k$. Notice that 
at least one vertex of $\Omega$ is assigned $1$ in any 
minus-domination function of size $k$. 
In that case the algorithm grows a search tree 
of size at most $(4d+2)^k \cdot k!$ before it arrives at 
the previous case (see~\cite{kn:alon}).   
\qed\end{proof}  

\section{Cographs}
\label{section cograph}
%%%%%%%%%%%%%%%%%%%%%%%

A minus domination  
with bounded size  
can be formulated in monadic second-order logic without 
quantification over subsets of edges. It follows that 
there is a linear-time algorithm to solve the problem  
for graphs of bounded treewidth or rankwidth (or cliquewidth)%  
~\cite{kn:langer}. 
It is less obvious that $\gamma^-$ 
is computable for bounded rankwidth when there is 
no restriction on the size. 
In this section 
we adapt a method of Yeh and Chang to show this. 

\bigskip 

It is well-known that the graphs of rankwidth one are 
the distance-hereditary graphs. 
We first analyze the complexity of the 
minus-domination problem for the class 
of cographs. 
Cographs form a proper subclass of the 
class of distance-hereditary graphs. 

\bigskip 

We denote a path with four vertices by $P_4$. 

\begin{definition}
A cograph is a graph without induced $P_4$. 
\end{definition}

\bigskip 

Cographs are characterized by the property that each 
induced subgraph with at least two vertices is either 
a join or a union of two smaller cographs. 
It follows that cographs admit a decomposition tree 
$(T,f)$ where $T$ is a rooted binary tree and 
where $f$ is a bijection from the vertices of $G$ to 
the leaves of $T$. Each internal node is labeled as 
$\otimes$ or $\oplus$. When the label is $\otimes$ then 
all vertices of the left subtree are adjacent to all 
vertices of the right subtree. A node that is 
labeled as $\otimes$ is called a join-node. When the label is 
$\oplus$ there is no edge between vertices of the right 
and left subtree. A node that is labeled 
as $\oplus$ is called a union-node. 
One refers to a decomposition tree of 
this type as a cotree. A cotree for a cograph can be obtained 
in linear time.  

\bigskip 

\begin{theorem}
\label{thm cograph}
There exists an efficient algorithm that 
computes $\gamma^{-}$ for cographs. 
\end{theorem}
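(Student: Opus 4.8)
The plan is to use the cotree structure via dynamic programming, computing for each subtree a small table of optimal partial solutions indexed by the relevant "interface" information. The key observation is that in a cograph, when we form a join $\otimes$ of two subgraphs $G_1$ and $G_2$, every vertex of $G_1$ becomes adjacent to every vertex of $G_2$. Therefore, for the closed-neighborhood constraint $f(N[x]) > 0$, a vertex $x \in V(G_1)$ sees all of its own neighbors within $G_1$ plus the entire vertex set of $G_2$. This means the only thing a vertex in $G_1$ needs to know about $G_2$ is the single number $f(V(G_2))$, the total weight assigned on the other side. This is exactly the feature that makes the Yeh--Chang style decomposition work.

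**Concretely, I would define** for each node $v$ of the cotree, with associated induced subgraph $G_v$, a table indexed by two integer parameters: the total weight $w = f(V(G_v))$ assigned so far, and a "deficiency" threshold $m$ describing, for those vertices of $G_v$ whose closed-neighborhood sum is not yet positive, the largest additional amount that must still be supplied from outside. More precisely, a partial assignment on $G_v$ is feasible-so-far if every vertex $x$ of $G_v$ satisfies $f(N_{G_v}[x]) + m > 0$ for the eventual outside contribution $m$; I would record, for each achievable pair, the minimum possible weight $w$. Since any minus-dominating function has weight between $-n$ and $n$ and the deficiencies are likewise bounded by $n$, each table has size polynomial in $n$.

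**The combination step proceeds as follows.** At a union-node $\oplus$, no edges cross between $G_1$ and $G_2$, so the two sides are independent: a vertex of $G_1$ receives no help from $G_2$, and we simply take, for each split of the outside contribution and each pair of compatible table entries, the sum of weights. At a join-node $\otimes$, a vertex $x \in V(G_1)$ has its closed neighborhood augmented by all of $V(G_2)$, so its effective outside contribution increases by $f(V(G_2)) = w_2$; symmetrically for $G_2$. Hence I would combine an entry of $G_1$ with parameter needing outside help $m_1$ and an entry of $G_2$ with $w_2$ by checking that $w_2$ covers the $G_1$-deficiency (and vice versa), and the merged weight is $w_1 + w_2$. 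Reading off the answer at the root, where no further outside help is available (so $m = 0$), gives $\gamma^{-}(G)$ as the minimum recorded weight among fully feasible entries.

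**The main obstacle** I anticipate is bookkeeping the deficiency parameter correctly so that the table remains polynomial-sized while still capturing enough information to verify every constraint after all joins are performed. A single aggregate "worst deficiency" number may be too coarse, since different vertices in $G_v$ may have different current neighborhood sums and will receive different future help depending on which ancestors are joins; the clean resolution is that, because a join adds the \emph{same} amount $f(V(G_2))$ to every vertex of $G_1$ uniformly, it suffices to track only the \emph{minimum} current closed-neighborhood sum over vertices of $G_v$ (equivalently the worst-case deficiency), together with the total weight $w$. Verifying that this single minimum statistic is a sound and complete summary—i.e.\ that uniform additive help preserves the invariant and that the root condition exactly recovers feasibility—is the crux of the argument, after which the polynomial running time follows immediately from the bounded range of the two integer indices.
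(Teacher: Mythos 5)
Your overall plan coincides with the paper's: a bottom-up dynamic program over the cotree whose state consists of the total weight assigned in a subtree together with a single scalar recording how much uniform outside help that subtree still needs, justified by exactly the observation the paper uses --- in a cograph every vertex outside a subtree is adjacent to all of it or to none of it, so all future help arrives uniformly and at a join only the other side's total weight $f(V(G_2))$ is relevant. (The paper's table $\zeta(t,a,b,c)$, indexed by the counts of $-1$'s, $0$'s and $1$'s and an outside-contribution parameter $t$, is your table with $w=c-a$ left uncompressed; both are of polynomial size.)

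There is, however, one step that fails as written. At a join node you propose to combine entries ``by checking that $w_2$ covers the $G_1$-deficiency (and vice versa).'' Demanding that a deficiency be paid at the first join where it is confronted is not sound: it may instead be paid by a join higher in the cotree. Concretely, take the wheel $G=\bigl((a\oplus b)\otimes(c\oplus d)\bigr)\otimes v$. Here $\gamma^{-}(G)=1$, attained only by assigning $0$ to $a,b,c,d$ and $1$ to the hub $v$; but at the inner join both sides have weight $0$ and deficiency $1$, so neither covers the other, this assignment is pruned, and one checks that the best assignment surviving your rule has weight $2$. The repair is the rule your final paragraph already identifies as ``the crux'': carry a residual, i.e.\ merge to weight $w_1+w_2$ and deficiency $\max(m_1-w_2,\,m_2-w_1)$ at a join (and $\max(m_1,m_2)$ at a union, where the outside help is shared by both sides, not ``split''), and test $m\le 0$ only at the root. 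This residual update is precisely the role of the shift of the parameter $t$ by the other side's weight in the paper's join recursion~\eqref{eqn4}. With that correction your argument is sound and is essentially the paper's proof.
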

\begin{proof}
Let $G=(V,E)$ be a cograph. We assume that a cotree for $G$ 
is a part of the input. Consider a subtree $T^{\prime}$ and let 
$W \subseteq V$ be the set of vertices that are mapped to 
the leaves in $T^{\prime}$. 

\medskip 

\noindent 
For three numbers $(a,b,c)$, an $(a,b,c)$-function 
is a function $f:W \rightarrow \{-1,0,1\}$ such that 
$f$ assigns $a$ vertices the value $-1$, $b$ vertices the 
value $0$ and $c$ vertices the value $1$. Obviously, we 
have that $a+b+c=|W|$.  

\medskip 

\noindent 
For an integer $t$, let 
\begin{multline}
\label{eqn1} 
\zeta(t,a,b,c)= \max\;|\;\{\;x\;\mid\; x \in W \quad  
\text{and}\quad f(N[x] \cap W) + t > 0 \quad \text{and}\\
\text{where $f$ is an $(a,b,c)$-function}\;\}\;|.
\end{multline}
When the set is empty we let $\zeta(t,a,b,c)=- \infty$. 

\medskip 

\noindent 
Notice that a minus-domination function 
with minimum weight can be computed when $\zeta$ is 
known for the root node, that is, when $W=V$. Namely, 
\begin{equation}
\label{eqn2}
\gamma^{-}(G)= 
\min\;\{\;-a+c\;\mid\; a+b+c=n \quad\text{and}\quad
\zeta(0,a,b,c)=n\;\}.
\end{equation}

\medskip 

\noindent 
We show how the values $\zeta(t,a,b,c)$ can be computed. 
Assume that $G$ is the union of two cographs 
$G_1=(V_1,E_1)$ and $G_2=(V_2,E_2)$. 
We denote the $\zeta$-values for $G_1$ and $G_2$ by 
$\zeta_1$ and $\zeta_2$.  
Then 
\begin{multline}
\label{eqn3}
\zeta(t,a,b,c) = \max \; \{\;\zeta_1(t,a_1,b_1,c_1)+\zeta_2(t,a_2,b_2,c_2) \\  
\text{where}\quad a_1+a_2=a \quad b_1+b_2=b \quad c_1+c_2=c\;\}.
\end{multline}

\medskip 

\noindent 
Now assume that $G$ is the join of $G_1$ and $G_2$. 
Then 
\begin{multline}
\label{eqn4}
\zeta(t,a,b,c)=\max\;\{\; \zeta_1(t-c_2+a_2,a_1,b_1,c_1)+ 
\zeta_2(t-c_1+a_1,a_2,b_2,c_2) \\
\text{where}\quad a_1+a_2=a \quad 
b_1+b_2=b \quad c_1+c_2=c\;\}.
\end{multline}
This proves the theorem. 
\qed\end{proof}

\bigskip 

\begin{remark}
Notice that complete multipartite graphs are cographs. 
Formulas for the signed and minus domination number of 
complete multipartite graphs appear 
in a recent paper by H.~Liang. 
\end{remark}

\bigskip 

By similar methods we obtain 
a polynomial algorithm for minus domination 
on distance-hereditary graphs. 
For brevity 
we put the proof of the next theorem in an appendix. 

\begin{theorem}
There exists a polynomial algorithm that computes 
$\gamma^-$ for distance-hereditary graphs. 
\end{theorem}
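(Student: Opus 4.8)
The plan is to mimic the cograph dynamic program of Theorem~\ref{thm cograph}, but to run it over the canonical split decomposition of a distance-hereditary graph instead of over a cotree. Recall the structural fact underlying the method of Yeh and Chang: a graph is distance-hereditary if and only if every component of its split decomposition is either a clique or a star. Binarizing this decomposition yields a rooted binary tree $T$ whose leaves are the vertices of $G$ and whose internal nodes come in two flavours, \emph{clique-nodes} and \emph{star-nodes}, playing roles analogous to the join-node and union-node of a cotree. Each subtree $T'$ of $T$ represents the induced subgraph on a vertex set $W$, and $T'$ communicates with the rest of $G$ through a single split marker. Hence there is a distinguished \emph{frontier} $F \subseteq W$ --- the vertices of $W$ incident to the marker --- with the crucial property that every vertex of $V\setminus W$ adjacent to $W$ is adjacent to \emph{all} of $F$ and to no vertex of $W\setminus F$.

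First I would pinpoint why the cograph program cannot be copied verbatim. In a join every vertex of $G_1$ sees every vertex of $G_2$, so the contribution of the outside world to a closed neighborhood is one and the same number for all vertices, which is exactly what the single parameter $t$ in $\zeta(t,a,b,c)$ records. In a distance-hereditary decomposition this uniformity fails: only the frontier vertices $F$ feel the outside world, while the interior vertices $W\setminus F$ do not. I would therefore carry the frontier explicitly in the state. For a subtree on $W$ with frontier $F$ and an $(a,b,c)$-function $f$ on $W$, define $\eta(\tau,a,b,c,w)$ to be the maximum number of \emph{satisfied} vertices $x\in W$, where $x\in F$ is satisfied if $f(N[x]\cap W)+\tau>0$ and $x\in W\setminus F$ is satisfied if $f(N[x]\cap W)>0$, the maximum taken over those $f$ with $f(F)=w$. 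Here $\tau$ is the uniform external weight the marker will deliver to the frontier, and $w$ is the weight this subtree in turn presents to the outside; both range over $O(n)$ integer values, so with $a+b+c=|W|$ the table has $O(n^4)$ entries per node.

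Next I would write the two combination rules. At a clique-node joining subtrees on $W_1$ and $W_2$, the frontiers become completely adjacent and the new frontier is $F_1\cup F_2$; since each newly internal side contributes its total weight only to the opposite frontier, the recurrence is the pointed analogue of the cograph join,
\[ \eta(\tau,a,b,c,w)=\max\;\bigl\{\;\eta_1(\tau+w_2,a_1,b_1,c_1,w_1)+\eta_2(\tau+w_1,a_2,b_2,c_2,w_2)\;\bigr\}, \]
the maximum ranging over $a_1+a_2=a$, $b_1+b_2=b$, $c_1+c_2=c$ and $w_1+w_2=w$. At a star-node the two sides play asymmetric roles: the frontier of the center side is joined to the frontier of the leaf side, but the leaf frontier is \emph{not} transmitted to the parent, so the surviving frontier is the one coming from the center side and only that side's threshold is boosted. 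This yields a recurrence of the same shape, but with the threshold shifts and the inherited frontier weight assigned according to which side carries the center. With $\eta$ known at the root, where $F=\es$ and $\tau=0$, the value $\gamma^-(G)$ is read off exactly as in~(\ref{eqn2}).

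The hard part is the bookkeeping at the star-nodes, precisely because of this asymmetry: for every star-node I must keep straight which incident marker leads to the parent, hence which side's frontier survives and which side's weight is therefore dropped from the outward value $w$ while still being charged to the opposite frontier's threshold. A careful but routine case analysis --- binarizing each star so that one leaf is attached at a time and the center is identified explicitly --- settles this. Correctness then follows by induction over $T$ from the frontier property stated above, and since the decomposition has $O(n)$ nodes and each combination is a bounded convolution over the $O(n^4)$-entry tables, the total running time is polynomial.
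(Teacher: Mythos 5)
Your proposal is correct and takes essentially the same route as the paper: the paper's appendix runs the cograph dynamic program over the twinset decomposition tree of a distance-hereditary graph, which is precisely the binarized clique/star split decomposition you describe, with your frontier $F$ playing the role of the paper's twinset $Q$ and your threshold $\tau$ the role of its parameter $t$. Your bookkeeping --- counting $(a,b,c)$ over all of $W$ and carrying the frontier weight $w$ in the state --- is a minor variant of the paper's state that, if anything, spells out details the paper leaves as an exercise.
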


\bigskip 

\begin{remark}
It is not hard to see that similar results can be derived for 
graphs of bounded rankwidth, that is, $\gamma^-$ is computable 
in polynomial time for graphs of bounded rankwidth 
(see, eg,~\cite{kn:kloks2}). The rankwidth 
appears as a function in the exponent of $n$. 
Graphs of bounded treewidth 
are contained in the class of bounded rankwidth and so a similar statement 
holds for graphs of bounded treewidth. At the moment we do not believe  
that there is a fixed-parameter algorithm, parameterized 
by treewidth or rankwidth, to compute $\gamma^-$.    
The results of~\cite[Section 4.2]{kn:zheng} seem wrong.\footnote{We 
communicated with the authors of~\cite{kn:zheng} and 
our ideas about it are now in agreement.} 
\end{remark}

\section{Strongly chordal graphs}
\label{section strongly chordal}
%%%%%%%%%%%%%%%%%%%%%%%%%%%%%%%%%

The minus domination problem is NP-complete for 
chordal graphs. 
In this section we show that the problem can be solved in 
polynomial time for strongly chordal graphs. 

\bigskip 

A graph is chordal if it has no induced cycle of length 
more than three. 
A chord in a cycle is an edge that runs between two vertices 
that are not consecutive in the cycle. 
Let $C=[x_1,\dots,x_{2k}]$ be an even cycle of 
length $2k$. A chord $\{x_i,x_j\}$ in $C$ is odd 
if the distance in $C$ between $x_i$ and $x_j$ is odd. 

\begin{definition}
A chordal graph $G$ is strongly chordal if each cycle in $G$ 
of even length at least 6 has an odd chord. 
\end{definition}

There are many characterizations of strongly chordal 
graphs~\cite{kn:farber2,kn:kloks3}. 
Perhaps the best known examples of strongly 
chordal graphs are the interval graphs. 

\bigskip 

In strongly chordal graphs the domination number is 
equal to the 2-packing 
number 
(see, eg,~\cite[Theorem 7.4.4]{kn:scheinerman}). 
It follows that the domination number for strongly chordal 
graphs is polynomial~\cite{kn:farber2}. 

\bigskip 

\begin{theorem}
The minus domination problem for strongly 
chordal graphs can be solved in $O(\min\;\{n^2, m \log n\})$ time. 
Here $n$ is the number of vertices and $m$ is the number of edges 
of the graph.  
\end{theorem}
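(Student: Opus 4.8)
The plan is to translate minus-domination into a bounded integer covering problem and then exploit the fact that the closed-neighborhood matrix of a strongly chordal graph is totally balanced. First I would fix a strong elimination ordering $v_1,\dots,v_n$ of $G$, which exists by Farber's characterization and can be computed within the stated time bound. Let $M$ be the closed-neighborhood matrix, so that $M_{uv}=1$ exactly when $v\in N[u]$. Writing $g(v)=f(v)+1$, a function $f:V\to\{-1,0,1\}$ is a minus-domination function if and only if $g:V\to\{0,1,2\}$ satisfies $\sum_{u\in N[v]} g(u)\ge |N[v]|+1$ for every $v$, since $f(N[v])>0$ is equivalent to the integer inequality $f(N[v])\ge 1$. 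Moreover $f(V)=g(V)-n$, so minimizing the weight is exactly minimizing $\mathbf{1}^{\mathsf T} g$. Thus computing $\gamma^-(G)$ reduces to
\[
\min\Bigl\{\,\mathbf{1}^{\mathsf T} g \ \Bigm|\ Mg\ge r,\ 0\le g\le 2,\ g\in\mathbb{Z}^V\,\Bigr\},\qquad r_v=|N[v]|+1,
\]
a pure covering program with a $0/1$ constraint matrix and integer box constraints.

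The next step would be to invoke the structural payoff of the strong elimination ordering: under this ordering $M$ is $\Gamma$-free, hence totally balanced, which is precisely the same property underlying Farber's domination-equals-$2$-packing result. For a totally balanced matrix the covering polyhedron $\{g : Mg\ge r,\ 0\le g\le u\}$ is integral, and an optimal integral solution is produced by a greedy algorithm that scans the vertices in elimination order. Concretely, I would process $v_1,\dots,v_n$ while maintaining for each vertex its current coverage deficit; whenever $v_i$ is still under-covered, the nesting of closed neighborhoods forced by the strong elimination ordering singles out a designated later neighbor at which to raise $g$, and one raises it by one unit, or by two when the cap and the remaining deficit permit, until $v_i$ is satisfied. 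Because every higher-indexed vertex dominated by $v_i$ is also dominated by this designated neighbor, the greedy choice never forfeits coverage that a later step might need.

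The hard part will be proving optimality of this greedy in the presence of the variable upper bound $g\le 2$ and the large right-hand sides $r_v=|N[v]|+1$; ordinary domination is the degenerate case $r\equiv 1$, $u\equiv 1$, so Farber's exchange argument must be strengthened to the bounded, higher-demand setting. I expect the cleanest route is linear-programming duality: the dual of the covering program is a fractional $2$-packing, total balancedness yields a matching pair of integral primal and dual optima, and an exchange argument based on the $\Gamma$-free property then certifies that the greedy solution is primal-feasible while meeting the dual bound, hence optimal. For the running time, maintaining the deficits and locating the designated neighbor along the elimination ordering can be organized with a balanced search structure to give $O(m\log n)$, whereas a direct scan of the neighborhood matrix gives $O(n^2)$; running whichever is cheaper yields the claimed $O(\min\{n^2, m\log n\})$ bound.
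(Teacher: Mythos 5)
Your proposal follows essentially the same route as the paper: the shift of variables $g=f+1$ turning the problem into a bounded integer covering program with right-hand sides $|N[v]|+1$, the appeal to total balancedness of the closed-neighborhood matrix under a strong elimination ordering to get integrality of the LP relaxation, and a Farber-style greedy certified by LP duality and complementary slackness to reach the $O(\min\{n^2, m\log n\})$ bound. Like the paper, you leave the detailed correctness proof of the greedy step unwritten, so your sketch matches the paper's proof both in approach and in level of completeness.
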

\begin{proof}
Farber describes a linear programming formulation 
for the domination problem. In this linear programming 
formulation we can change the variables from $x_i$ to $z_i=x_i+1$. 
This changes the constraints $-1 \leq x_i \leq 1$ into 
$0 \leq z_i \leq 2$. The linear program becomes 
\begin{align*}
\text{Minimize} & \quad \sum_{i=1}^n \; z_i & \\
\text{subject to} & \quad\sum_{i \in N[k]}\; z_i \geq b_k & 
\text{for each $k$}\\
\text{and} & \quad 0 \leq z_i \leq 2 & \text{for each $i$}.
\end{align*}
In our case, the variable $b_k$ is equal to $|N[k]|+1$. 

\medskip 

\noindent 
The closed neighborhood matrix of a strongly chordal graph is 
totally balanced.  By~\cite{kn:fulkerson,kn:hoffman,kn:kolen} (see 
also, eg,~\cite[Theorem A.3.4]{kn:scheinerman}) 
the 
integer program and its linear relaxation have the same value.

\medskip 

\noindent 
To deal with the constraints $z_i \leq 2$ we write the LP as 
\begin{align*}
\text{Minimize} & \quad \mathbf{j}^T \cdot \mathbf{z} & \\
\text{subject to} & \quad 
\begin{pmatrix} A \\ -I \end{pmatrix} \mathbf{z} \geq 
\begin{pmatrix} \mathbf{b} \\ -2 \cdot \mathbf{j} \end{pmatrix} & 
\text{and} & \quad \mathbf{z} \geq \mathbf{0}. 
\end{align*}
Here, the matrix $A$ is the closed 
neigborhood matrix, and  
the vector $\mathbf{b}$ is equal to  
\[\mathbf{b}=\mathbf{j}+ A \mathbf{j}.\]  

\medskip 

\noindent 
The dual of this LP is 
\begin{align*}
\text{Maximize} & \quad \mathbf{b}^T \cdot \mathbf{y_1}-
2\mathbf{j}^T \cdot \mathbf{y_2}
\\
\text{subject to} &  \quad A\mathbf{y_1} \leq \mathbf{j}+\mathbf{y_2}  
\quad \text{and} & \quad \mathbf{y_1} \geq \mathbf{0} \quad\text{and}\quad 
\mathbf{y_2} \geq \mathbf{0}. 
\end{align*}
Notice that 
\[y_{2,k}=\max\;\{\;0,\;-1+\sum_{i \in N[k]}\;y_{1,i}\;\} \quad 
\text{for all $k$.}\] 

\medskip 

\noindent 
The complementary slackness conditions are as follows. 
\begin{eqnarray*}
y_{1,k} > 0 &\quad\Rightarrow\quad& \sum_{i \in N[k]} z_i=1+|N[k]| \\
\sum_{i \in N[k]}y_{1,i} > 1 & \quad\Rightarrow\quad & z_k=2, \quad\text{and}\\
z_k > 0 & \quad\Rightarrow\quad& \sum_{i \in N[k]} y_{1,i} \geq 1.
\end{eqnarray*} 

\medskip 

\noindent
Solving the linear problem can be done in $O(n^{3.5} \log n)$. 
Farber's method can be used to bring it down to $O(n^2)$ 
or $m \log n)$, which is the time needed to compute a 
strong elimination ordering. We omit the details; see Remark~\ref{rem lee}. 
\qed\end{proof} 

\begin{remark}
When $G$ is strongly chordal then $G^2$ is that also~\cite{kn:lubiw}. 
A simple vertex of $G$ is simplicial in $G^2$. The weighted $2$-packing 
problem in $G$ asks for the maximal weight independent set in $G^2$. 
This can be solved in linear time~\cite{kn:frank}. 
It uses the fact that in any chordal 
graph, with integer weights on the vertices, the maximal weight of 
an independent set equals the minimal number of cliques that have the 
property that every vertex is covered at least as many times by 
cliques as its weight.   
\end{remark}
  
\begin{corollary}
The exists a linear-time algorithm that solves 
minus domination on interval graphs. 
\end{corollary}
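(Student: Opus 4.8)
The plan is to derive the corollary from the previous theorem by observing that every interval graph is strongly chordal, so the $O(\min\{n^2, m\log n\})$ algorithm for strongly chordal graphs already applies verbatim; what remains is to shave the running time down to linear by exploiting the additional structure of interval graphs. My first step would be to isolate the superlinear part of the strongly-chordal algorithm. Inspecting that proof, the LP set-up, Farber's greedy dual procedure, and the recovery of an integral minus-domination function $f:V\to\{-1,0,1\}$ (equivalently $z_i\in\{0,1,2\}$) through the complementary-slackness conditions all run in linear time once a strong elimination ordering is available; the only step with a logarithmic factor is the computation of that ordering, which for general strongly chordal graphs costs $O(m\log n)$. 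Hence it suffices to produce a strong elimination ordering of an interval graph in linear time.

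The second step would be to exhibit such an ordering directly from an interval model. An interval representation can be computed in $O(n+m)$ time by a Lex-BFS based recognition algorithm, and the $2n$ endpoints can then be sorted in linear time by bucketing. I would order the vertices by right endpoint and argue that this is a strong elimination (simple-vertex) ordering: the interval with the leftmost right endpoint is simplicial, since all intervals overlapping it pairwise overlap, and moreover it is simple because the closed neighborhoods of its neighbors are linearly ordered by inclusion along the endpoint sweep. Iterating this observation on the subgraphs induced by the remaining intervals yields the required ordering, obtained in $O(n+m)$ time overall.

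Feeding this ordering into Farber's method from the previous theorem then gives the claimed linear-time algorithm. The main obstacle I anticipate is not the ordering itself, which is standard, but the verification that Farber's dual-ascent together with the complementary-slackness recovery of the primal variables can be implemented in a single pass over the ordering without reintroducing a logarithmic overhead; this is precisely the point that the strongly-chordal proof defers. For interval graphs I expect this to go through cleanly because, along the right-endpoint ordering, the sets $N[k]$ appear as consecutive blocks, so each closed-neighborhood constraint and each dual variable $y_{1,k}$ can be updated in amortized constant time during the sweep, giving the $O(n+m)$ bound.
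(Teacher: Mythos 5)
Your proposal is correct and takes essentially the same route the paper intends: the corollary is derived from the strongly chordal theorem, whose only superlinear cost is computing a strong elimination ordering, and for interval graphs a linear-time interval model plus the right-endpoint ordering (which is a simple-vertex, hence strong, elimination ordering) removes that bottleneck. The remaining concern you raise---implementing Farber's method in linear time once the ordering is given---is exactly the point the paper also defers, to the cited linear-time algorithm of Lee and Chang for strongly chordal graphs with a strong elimination ordering as part of the input.
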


\begin{remark}
\label{rem lee}
After the publication of our draft on arXiv, one of the authors 
of their paper, quoted in the footnote, drew our attention to their result. 
The authors claim a 
linear algorithm for minus domination on 
strongly chordal graphs. (Here, they assume that a strong 
elimination ordering is a part of the input).% 
~\footnote{
C.~Lee and M.~Chang, Variations of $Y$-dominating functions 
on graphs, {\em Discrete Mathematics\/} {\bf 308} (2008), pp.~4185--4204.}
\end{remark}

\section{Splitgraphs}
\label{section split}
%%%%%%%%%%%%%%%%%%%%%

In this section we show that the minus-domination 
problem is NP-complete for splitgraphs. 
We reduce the $(3,2)$-hitting set problem to the minus-domination 
problem. The $(3,2)$-hitting set problem is defined 
as follows (see, eg,~\cite{kn:mellor}). 

\begin{description}
\item[Instance:] Let $\mathcal{C}$ be a collection of sets, each 
containing exactly three elements from a universe $U$. 
\item[Question:] Find a smallest set $U^{\prime} \subseteq U$ 
such that for each $C \in \mathcal{C}$, 
\[|C \cap U^{\prime}| \geq 2.\] 
\end{description}

\begin{lemma}
The $(3,2)$-hitting set is NP-complete. 
\end{lemma}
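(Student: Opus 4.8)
The plan is to prove the two standard ingredients for the decision version of the problem: membership in NP and NP-hardness via a reduction from a well-known NP-complete problem. Membership in NP is immediate, since for a candidate set $U'$ one verifies in polynomial time that $|U'|$ respects the prescribed bound and that $|C \cap U'| \geq 2$ holds for every $C \in \mathcal{C}$. The content is therefore the hardness reduction, and I would reduce from the classical \textsc{Vertex Cover} problem; throughout, $\tau(G)$ denotes the size of a minimum vertex cover of $G$.

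Given an instance $(G,k)$ of \textsc{Vertex Cover} with $G=(V,E)$, I would introduce a single fresh ``global'' element $g \notin V$ and set $U = V \cup \{g\}$. For each edge $\{u,v\} \in E$ I would place the three-element set $\{u,v,g\}$ into the collection $\mathcal{C}$. Isolated vertices of $G$ occur in no set and may be deleted, so I assume $G$ has none. The intended correspondence is that an optimal $(3,2)$-hitting set consists of $g$ together with a minimum vertex cover of $G$; concretely, I would prove that $G$ has a vertex cover of size at most $k$ if and only if the constructed instance admits a feasible $U'$ with $|U'| \leq k+1$.

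The forward direction is transparent: if $S$ is a vertex cover with $|S| \leq k$, then $U' = S \cup \{g\}$ meets every set $\{u,v,g\}$ in at least two elements, since $g \in U'$ and at least one of $u,v$ lies in $S$. For the converse I would establish the sharper claim that the minimum feasible size equals $1 + \tau(G)$, by a case analysis on a feasible $U'$. If $g \in U'$, then each set $\{u,v,g\}$ forces at least one of $u,v$ into $U'$, so $U' \setminus \{g\}$ is a vertex cover and $|U'| \geq 1 + \tau(G)$. If $g \notin U'$, then every set forces \emph{both} of its endpoints into $U'$, whence $U' = V$ and $|U'| = |V|$. This second branch is the delicate point: one must rule out that refusing to pay for the dummy $g$ could ever beat paying for it. Since every graph on a nonempty vertex set has an independent set of size at least one, we have $\tau(G) \leq |V| - 1$, so $1 + \tau(G) \leq |V|$; the ``omit $g$'' solution is thus never strictly cheaper, and the minimum feasible size is exactly $1 + \tau(G)$.

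Combining the two directions gives the equivalence: a feasible set of size at most $k+1$ exists if and only if $1 + \tau(G) \leq k+1$, i.e.\ if and only if $\tau(G) \leq k$. The construction uses $|V|+1$ elements and $|E|$ triples and is computable in linear time, so this is a valid polynomial many-one reduction, completing the proof. The main obstacle is precisely the case distinction on whether $g \in U'$, where the elementary bound $\tau(G) \leq |V|-1$ is exactly what is needed to certify that including the global dummy is optimal.
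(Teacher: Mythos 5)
Your proof is correct, and it rests on the same core idea as the paper's: reduce from vertex cover by augmenting every $2$-element set (edge) with one common new element, so that the ``hit twice'' requirement on each triple becomes a ``cover once'' requirement on the original pair. Where the two arguments genuinely differ is in how that common element is handled. The paper forces its universal element $\alpha$ into every optimal solution by means of a gadget --- three additional elements $\beta,\gamma,\delta$ and two additional triples $\{\alpha,\beta,\gamma\}$ and $\{\alpha,\beta,\delta\}$ --- together with an exchange argument: a solution avoiding $\alpha$ must contain $\beta,\gamma,\delta$ and can be shrunk by swapping these for $\alpha,\beta$. You use no gadget at all: you allow solutions omitting $g$, observe that any such solution must contain both endpoints of every edge and hence (after discarding isolated vertices) all of $V$, and then invoke the elementary bound $\tau(G)\le|V|-1$ to conclude that omitting $g$ never beats the intended solution $\{g\}\cup S$; this yields the exact identity that the minimum feasible size is $1+\tau(G)$ and the clean threshold correspondence between $k$ and $k+1$. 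Your route is leaner (one extra element, no extra sets) and more explicit about the quantitative correspondence; the paper's sketch leaves its threshold implicit --- with its gadget the true optimum is $\tau(G)+2$, since after fixing $\alpha$ the two auxiliary triples still need a second hit (e.g.\ $\beta$), a detail glossed over by the phrase ``the problem is equivalent to the $(2,1)$-hitting set.'' You also verify NP membership explicitly, which the paper omits. Both reductions are polynomial, so both establish the lemma; yours trades the paper's gadget-plus-exchange argument for a direct cost comparison.
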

\begin{proof}
The reduction is from vertex cover, ie, $(2,1)$-hitting set. 
The $(2,1)$-hitting set is defined similar as above, except that 
in this case every subset has two elements and the problem 
is to find a subset $U^{\prime}$ which hits every subset 
at least once. 

\medskip 

\noindent
Consider an instance of $(2,1)$-hitting set. Let 
$\mathcal{C}$ be the collection of $2$-element subsets 
of a universe $U$. 
Add four vertices to the universe, say $\alpha$, $\beta$, 
$\gamma$ and $\delta$. Add $\alpha$ to every subset of $\mathcal{C}$ 
and add two subsets, $\{\alpha,\beta,\gamma\}$ and 
$\{\alpha,\beta,\delta\}$. We claim that any solution 
of this $(3,2)$-hitting set problem has $\alpha$ in the 
hitting set. If not, then $\{\beta,\gamma,\delta\}$ is a 
subset of the $(3,2)$-hitting set. In that case we may 
replace the elements $\beta$, $\gamma$ and $\delta$ with 
$\alpha$ and $\beta$. Then we obtain a $(3,2)$-hitting set 
with fewer elements. 

\medskip 

\noindent 
Thus, we may assume that $\alpha$ is in the $(3,2)$-hitting set. 
But now the problem is equivalent to the $(2,1)$-hitting set, 
since every adapted subset contains $\alpha$. 
\qed\end{proof}

\begin{theorem}
\label{thm NP-c}
The minus-domination problem is NP-complete 
for splitgraphs. 
\end{theorem}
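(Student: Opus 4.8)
The plan is to reduce from the $(3,2)$-hitting set problem, which the previous lemma shows is NP-complete; since a minus-domination function can be guessed and verified in polynomial time, the problem is clearly in NP. Given a collection $\mathcal{C}$ of $3$-element subsets of a universe $U$ (where, after a trivial preprocessing, every element of $U$ lies in at least one set of $\mathcal{C}$), I build a splitgraph $G$ as follows. The clique $K$ consists of one vertex $v_u$ for every element $u\in U$, together with a set of $L$ auxiliary \emph{guard} vertices $g_1,\dots,g_L$; the independent set $I$ consists of one \emph{witness} vertex $w_C$ for every $C\in\mathcal{C}$, joined exactly to the three vertices $v_u$ with $u\in C$, together with two private pendant vertices attached to each guard. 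The intended correspondence is that a witness coloured red, $f(w_C)=-1$, forces $f(N[w_C])=f(w_C)+\sum_{u\in C}f(v_u)>0$, that is $\sum_{u\in C}f(v_u)\ge 2$; when the element vertices take values in $\{0,1\}$ this says exactly that at least two of the three elements of $C$ are selected, which is the $(3,2)$-hitting condition. Selecting an element costs $+1$ in the weight, so minimising the weight should minimise the size of the hitting set.

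Two difficulties must be overcome, and together they are the heart of the proof. First, each clique vertex sees the whole clique in its closed neighbourhood, so the constraint $f(N[v_u])=f(K)+\sum_{C\ni u}f(w_C)>0$ can be destroyed when many incident witnesses are red. I neutralise this with the guards: the two private pendants force each guard to the value $1$ in any minimum solution (colouring a guard $0$ would force both its pendants to $1$, which is more expensive, and colouring it $-1$ is infeasible), so $f(K)\ge L$. Choosing $L$ larger than $|U|$ plus the maximum number of sets through a single element makes every clique constraint hold automatically, no matter how the witnesses and elements are coloured, and the guards then contribute only a fixed additive constant $L$ to the weight. Second, an element vertex is free to take the value $-1$, which would spoil the identification ``selected $=$ value $1$''. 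Here I use an exchange argument: because every element lies in at least one set, recolouring an element from $-1$ to $0$ and re-minimising the incident witnesses never increases the weight, so there is always a minimum-weight function in which every element vertex is coloured $0$ or $1$ and every witness takes its smallest feasible value.

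With these normalisations the weight of such a function is $L+|S_1|+\sum_{C}g(|C\cap S_1|)$, where $S_1$ is the set of selected elements and $g(0)=1,\ g(1)=0,\ g(2)=g(3)=-1$ records the cheapest feasible witness colour. The forward direction is then immediate: a hitting set of size $k$ yields a function of weight $L+k-|\mathcal{C}|$ by colouring its elements $1$, all witnesses $-1$, and all guards $1$. For the reverse direction the key step is a rounding inequality: from any colouring one obtains a genuine hitting set by adding, for each under-hit set $C$, enough of its elements to reach two, which shows $|S_1|-\sum_C\min(|C\cap S_1|,2)\ge \tau-2|\mathcal{C}|$, where $\tau$ is the minimum hitting-set size. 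Hence the minimum weight equals exactly $L+\tau-|\mathcal{C}|$, and a minus-domination function of weight at most $L+k-|\mathcal{C}|$ exists if and only if $\mathcal{C}$ has a $(3,2)$-hitting set of size at most $k$.

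The step I expect to be the main obstacle is the interaction between the global clique constraint and the freedom to use the value $-1$: one must simultaneously guarantee that the dense clique cannot be starved by red witnesses (handled by the guards and the choice of $L$) and that the optimum never profits from painting element vertices red (handled by the exchange argument), while keeping the additive bookkeeping clean enough that the minimum weight reads off the hitting-set number. Verifying that the guard gadget behaves as claimed in \emph{every} optimal solution, and that the rounding inequality is tight, are the calculations I would then carry out in full.
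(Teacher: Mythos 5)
Your proposal is correct and takes essentially the same route as the paper: a reduction from $(3,2)$-hitting set in which the universe forms the clique, the sets form the independent set (each set-vertex adjacent to its three elements), and a large collection of pendant-protected auxiliary clique vertices (the paper uses $|U|+|\mathcal{C}|+1$ of them, each with one pendant, versus your $L$ guards with two pendants) is forced to value $1$, making all clique constraints hold automatically, after which exchange arguments normalize element vertices to $\{0,1\}$ and set vertices to $-1$. The remaining differences are only bookkeeping --- the paper swaps values between a set vertex and its clique neighbors where you re-minimize witnesses and invoke a rounding inequality --- and both yield the same correspondence between minimum weight and the hitting-set number $\tau$.
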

\begin{proof}
Consider an instance of the $(3,2)$-hitting set. 
We first construct a splitgraph where $U$ is the clique and where each 
element $C \in \mathcal{C}$ is a vertex of the independent set, 
and adjacent exactly to the three vertices of $C$ in the clique. 
Next, we extend 
the splitgraph by adding auxiliary vertices in the clique and the 
independent set, respectively.  Precisely, we add a set $X$ of 
$|U| + |\mathcal{C}| + 1$ vertices in the clique, 
and for each vertex $x$ in $X$, we 
add a distinct vertex $x^{\prime}$ 
in the independent set that connects with $x$.
This completes the description of the splitgraph. Let $V$ be the set 
of vertices of this graph, that is 
\[V=U \cup X \cup \{\;x^{\prime}\;|\; x \in X\;\} \cup \mathcal{C}.\] 

\medskip 

\noindent 
Consider a minus-domination function $f$ of minimal weight. 
Notice that, we may assume that for each vertex $x$ in $X$, $f(x) = 1$.  
Otherwise, by considering the closed neighborhood $N[x^{\prime}]$, 
we require $f(x^{\prime}) + f(x) > 0$,
so that $f(x^{\prime}) = 1$ and $f(x) = 0$;  
in such a case, we can reset $f(x^{\prime})$ as $0$ 
and $f(x)$ as $1$, while maintaining validity 
(i.e., positive total weight for each close neighborhood) 
and optimality (ie, minimum total weight) of the assignment. 

\medskip 

\noindent 
Notice that for any function $f: V \rightarrow \{-1,0,1\}$ we have 
that 
\[\forall_{x \in X}\; f(x)=1 \;\Rightarrow\; \forall_{u \in U}\; f(N[u])>0\] 
no matter what values  
the vertices $u \in U$  
or $C \in \mathcal{C}$ 
are assigned.  

\medskip 

\noindent
We may now, further assume that for each 
vertex $C$ in the independent set  
\begin{equation}
\label{eqn7}
f(C) \leq \min\;\{\;f(u)\;|\; u \in N(C)\;\}.
\end{equation}
If this were not the case, then we could swap the value $f(C)$ 
with the value of a vertex in $N(C)$ and obtain a minus-domination 
function of at most the same weight, satisfying~\eqref{eqn7}. 
Note that after the change, we cannot have $f(N(C)) = 0$.

\medskip 

\noindent 
We claim that there is a domination function of minimal 
weight with $f(C)=-1$ for every $C \in \mathcal{C}$. 
To see that, consider the following cases. 
If $f(N(C)) =3$, then we have $f(C)=-1$. 
If $f(N(C))=2$, then $N(C)$ has two ones and one zero. 
Also in that case we have $f(C)=-1$. 
The only case that is left is where $N(C)$ contains 
one $1$ and two zeroes and $f(C)=0$. In that case we may 
change the value of a zero in $N(C)$ to one, and $f(C)$ to 
$-1$. Repeated application of this type of exchange 
produces a minus domination function of the same 
weight and satisfying the claim. 

\medskip 

\noindent
So, we may assume that for $C \in \mathcal{C}$, $f(C)=-1$ 
and that for each vertex $u \in U$, $f(u) \in \{0,1\}$. 
Since $f(C)=-1$, the minus-domination function 
has at least two plus ones in $N(C)$. 

\medskip 

\noindent
This proves the theorem.
\qed\end{proof}

\subsection{Minus domination is not FPT}
%%%%%%%%%%%%%%%%%%%%%%%%%%%%%%%%%%%%%%%%

Consider the following problem. 
\begin{description}
\item[Instance:] A graph $G$. 
\item[Question:] Does $G$ have a minus domination of weight at most $0$? 
\end{description}
Following Hattingh et al.,  
we call this `the zero minus-domination problem.' 

Consider the graph $L$ in Figure~\ref{fig L}. 
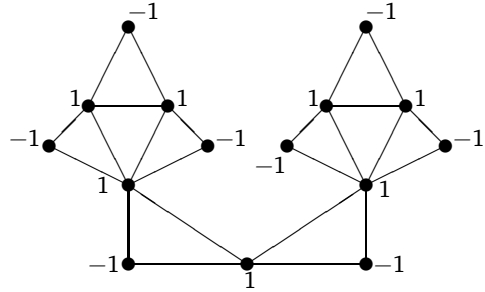
\begin{figure}
\setlength{\unitlength}{1.50pt}
\begin{center}
\begin{picture}(110,70)
%\thicklines
\put(10,40){\circle*{3.0}}
\put(20,50){\circle*{3.0}}
\put(30,30){\circle*{3.0}}
\put(30,70){\circle*{3.0}}
\put(40,50){\circle*{3.0}}
\put(50,40){\circle*{3.0}}
\put(10,40){\line(2,-1){20}}
\put(10,40){\line(1,1){10}}
\put(50,40){\line(-2,-1){20}}
\put(50,40){\line(-1,1){10}}
\put(30,30){\line(-1,2){10}}
\put(30,30){\line(1,2){10}}
\put(20,50){\line(1,0){20}}
\put(30,70){\line(-1,-2){10}}
\put(30,70){\line(1,-2){10}}

\put(70,40){\circle*{3.0}}
\put(80,50){\circle*{3.0}}
\put(90,30){\circle*{3.0}}
\put(90,70){\circle*{3.0}}
\put(100,50){\circle*{3.0}}
\put(110,40){\circle*{3.0}}
\put(70,40){\line(2,-1){20}}
\put(70,40){\line(1,1){10}}
\put(110,40){\line(-2,-1){20}}
\put(110,40){\line(-1,1){10}}
\put(90,30){\line(-1,2){10}}
\put(90,30){\line(1,2){10}}
\put(80,50){\line(1,0){20}}
\put(90,70){\line(-1,-2){10}}
\put(90,70){\line(1,-2){10}}

\put(30,10){\circle*{3.0}}
\put(60,10){\circle*{3.0}}
\put(90,10){\circle*{3.0}}
\put(30,10){\line(0,1){20}}
\put(90,10){\line(0,1){20}}
\put(30,10){\line(1,0){60}}
\put(60,10){\line(-3,2){30}}
\put(60,10){\line(3,2){30}}

\put(0,40){$-1$}
\put(15,50){$1$}
\put(30,72){$-1$}
\put(52,40){$-1$}
\put(42,50){$1$}
\put(22,28){$1$}

\put(62,33){$-1$}
\put(75,50){$1$}
\put(90,72){$-1$}
\put(102,50){$1$}
\put(112,40){$-1$}
\put(93,27){$1$}

\put(20,8){$-1$}
\put(59,4){$1$}
\put(92,8){$-1$}

\end{picture}
\caption{The graph $L$. It has $\gamma^{-}(L)=-1$.}
\label{fig L}
\end{center}
\end{figure}

\begin{lemma}
The graph $L$ has minus-domination weight $\gamma^{-}(L)=-1$. 
The minus-domination function that achieves this weight 
is unique; it is the one depicted in Figure~\ref{fig L}. 
\end{lemma}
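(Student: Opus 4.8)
The plan is to establish the two claims—that $\gamma^-(L)=-1$ and that the optimal function is unique—together, from a single inequality. For the upper bound $\gamma^-(L)\le -1$ I would just check directly that the function $f$ drawn in Figure~\ref{fig L} is a valid minus-domination function of weight $-1$; the salient feature, which I would record for later use, is that under $f$ every closed neighborhood of $L$ has weight exactly $1$.

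For the lower bound the idea is to use only the eight vertices of degree $2$: the three outer vertices of each of the two gadgets (each gadget being a triangle with one extra vertex adjacent to the endpoints of each of its three edges), together with the two endpoints of the bottom path. Adding their eight constraints $f(N[x])\ge 1$, one finds that each vertex of $L$ picks up a fixed coefficient: the eight degree-$2$ vertices get coefficient $1$; the five ``inner'' vertices (the two triangle vertices of each gadget other than its coupling vertex, together with the center of the bottom path) get coefficient $2$; and the two coupling vertices (the triangle vertices joined downward to the bottom) get coefficient $3$. Writing the total weight as $T=N+M+CI$, where $N$, $M$, $CI$ are the partial sums over the vertices of coefficient $1$, $2$, $3$ respectively, the summed inequality reads $N+2M+3CI\ge 8$, and therefore $T=(N+2M+3CI)-(M+2CI)\ge 8-(M+2CI)\ge 8-(5+4)=-1$, where I used $M\le 5$ (five vertices, each at most $1$) and $CI\le 2$ (two vertices). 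This gives $\gamma^-(L)\ge -1$, hence equality.

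Uniqueness comes from tracking when this same chain is tight. If $T=-1$, then $8-(M+2CI)\le -1$ forces $M+2CI\ge 9$, while $M+2CI\le 9$ holds always; hence $M+2CI=9$, which by $M\le 5$ and $CI\le 2$ forces $M=5$ and $CI=2$. Thus all five inner vertices and both coupling vertices receive $+1$. Finally $N=T-M-CI=-1-5-2=-8$, so each of the eight degree-$2$ vertices receives $-1$. All fifteen values are now pinned down, and they are exactly the assignment of Figure~\ref{fig L}; since that assignment is already known to be a valid minus-domination function, it is the unique optimum.

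The one delicate point I anticipate is the bookkeeping behind the coefficient pattern $1,2,3$: I must read the closed neighborhoods correctly—in particular that the three bottom vertices form a \emph{path} whose center is also adjacent to both coupling vertices—so that summing exactly the eight degree-$2$ constraints leaves each higher-degree vertex with the right multiplicity. Once the correct eight inequalities are identified, the bound and the uniqueness are forced simultaneously, and no separate case analysis on the degree-$4$ and degree-$6$ vertices is needed: their constraints enter only implicitly, through the already-verified validity of the extremal function.
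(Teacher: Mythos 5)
Your proof is correct, and it is worth noting that the paper offers no proof of this lemma at all---it is stated bare---so your argument genuinely fills a gap rather than paralleling an existing one. Your reading of the figure is the right one: the three bottom vertices form a path whose center is adjacent to both coupling vertices (under the alternative readings, where the horizontal line is a single edge or a bottom triangle, the depicted assignment would already fail at a bottom endpoint, so those cannot be intended). The coefficient bookkeeping checks out: summing $f(N[x])\ge 1$ over the eight degree-two vertices gives those eight vertices coefficient $1$, the five inner vertices (two per triangle plus the bottom center) coefficient $2$, and the two coupling vertices coefficient $3$; a sanity check is $8\cdot 1+5\cdot 2+2\cdot 3=24=8\cdot 3$, the total size of the eight closed neighborhoods used. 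From $N+2M+3CI\ge 8$ together with $M\le 5$ and $CI\le 2$ you get $T\ge -1$, and in the tight case $M+2CI=9$ forces $M=5$, $CI=2$, hence $N=-8$, pinning all fifteen values to exactly those of the figure. What this buys over the obvious alternative---brute-force case analysis or ad hoc local exchange arguments---is that the lower bound and the uniqueness drop out of a single linear inequality, with the constraints at the degree-four and degree-six vertices entering only through the verified validity of the extremal assignment. One presentational nit: you record that the figure's function makes every closed neighborhood sum to exactly $1$ ``for later use,'' but your uniqueness argument never actually needs that tightness, only the function's validity.
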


\begin{theorem}
The zero minus-domination problem is NP-complete. 
\end{theorem}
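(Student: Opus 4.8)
The plan is to establish the two requirements for NP-completeness separately. Membership in NP is immediate: a minus-domination function $f:V\to\{-1,0,1\}$ of weight at most $0$ is a certificate of linear size, and one checks in polynomial time both that $f(N[x])>0$ for every vertex $x$ and that $f(V)\le 0$. The substance lies in NP-hardness, where I would reduce from the ordinary minus-domination decision problem --- given $(G,k)$, decide whether $\gamma^{-}(G)\le k$ --- which is NP-complete (indeed already for several restricted classes, as recalled in the introduction, and as shown for splitgraphs in Theorem~\ref{thm NP-c}). The difficulty to overcome is that this source problem carries a threshold $k$ as part of the input, whereas the target problem has the threshold hard-wired to $0$; the entire reduction therefore amounts to simulating an arbitrary threshold by the fixed one.

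The mechanism is a pair of weight-shifting gadgets attached as disjoint components, exploiting the additivity of $\gamma^{-}$ over disjoint unions: if $H$ is the disjoint union of $H_1$ and $H_2$ then $\gamma^{-}(H)=\gamma^{-}(H_1)+\gamma^{-}(H_2)$, since a minus-domination function restricts independently to each component and, conversely, optimal functions on the components combine. First, an isolated vertex $v$ is forced by $f(N[v])=f(v)>0$ to receive the value $1$, so each isolated vertex raises the optimum weight by exactly $+1$. Second, by the preceding lemma the graph $L$ of Figure~\ref{fig L} satisfies $\gamma^{-}(L)=-1$, so each disjoint copy of $L$ lowers the optimum weight by exactly $-1$. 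Given an instance $(G,k)$ with, as we may assume, $-n\le k\le n$ (outside this range the answer is trivial, since $-n\le\gamma^{-}(G)\le n$ always), I would build $G^{\prime}$ as the disjoint union of $G$ with $k$ copies of $L$ when $k\ge 0$, and as the disjoint union of $G$ with $|k|$ isolated vertices when $k<0$.

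In either case additivity gives $\gamma^{-}(G^{\prime})=\gamma^{-}(G)-k$, so that $\gamma^{-}(G^{\prime})\le 0$ if and only if $\gamma^{-}(G)\le k$, which is exactly the required equivalence; and since $|k|\le n$ while each copy of $L$ has constant size, $G^{\prime}$ has $O(n)$ vertices and is produced in polynomial time. The main obstacle is not the reduction itself, which becomes routine once the pieces are in place, but the availability of a bounded-size component whose minus-domination weight is negative and realized \emph{exactly}: an isolated vertex can only push the optimum up, so a genuinely nontrivial gadget such as $L$ is indispensable for pushing it down by a controlled amount. The preceding lemma supplies precisely this; its uniqueness clause, although stronger than additivity strictly needs, lets one describe the optimal solutions of $G^{\prime}$ cleanly as an optimal solution on $G$ together with the canonical labelling on each copy of $L$.
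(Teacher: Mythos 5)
Your proposal is correct and follows essentially the same route as the paper: a reduction from the threshold problem $\gamma^{-}(H)\le k$ (NP-complete by Theorem~\ref{thm NP-c}) that attaches $k$ disjoint copies of $L$ and uses additivity of $\gamma^{-}$ over disjoint unions to get $\gamma^{-}(G')=\gamma^{-}(H)-k$. Your extra touches --- explicit NP membership, the isolated-vertex gadget for negative $k$ (unneeded, since hardness already holds for positive $k$), and the observation that one may assume $|k|\le n$ --- only tidy up details the paper leaves implicit.
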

\begin{proof}
Let $H$ be a graph and let $G$ be the union of $H$ 
and $k$ disjoint copies of $L$. 
Obviously 
\[\gamma^{-}(G)=\gamma^{-}(H)+k\cdot \gamma^{-}(L)=\gamma^{-}(H)-k.\] 
It follows that $\gamma^{-}(G) \leq 0$ if and only if $\gamma^{-}(H) \leq k$. 
By Theorem~\ref{thm NP-c}, given a graph $H$ and a positive $k$ 
it is NP-complete to decide whether $\gamma^{-}(H) \leq k$. 
\qed\end{proof}

\bigskip 

\begin{theorem}
The minus-domination problem is not fixed-parameter 
tractable, unless $P=NP$. 
\end{theorem}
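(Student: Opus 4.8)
The plan is to obtain the statement as an immediate corollary of the NP-completeness of the zero minus-domination problem, via the routine observation that a fixed slice of a fixed-parameter tractable problem is solvable in polynomial time. Note first that the relevant parameter here is the weight, not the size: the earlier sections already supply fixed-parameter algorithms parameterized by the size, so the present claim must concern the standard parameterization of the minimization problem by its target value. I therefore take as input a pair $(G,k)$, ask whether $\gamma^-(G)\le k$, and treat $k$ as the parameter; by definition a fixed-parameter algorithm decides this in time $f(k)\cdot n^c$ for some computable $f$ and a constant exponent $c$ independent of $k$.

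The key step is to notice that the zero minus-domination problem is exactly the slice $k=0$ of this parameterized problem. Suppose, for contradiction, that minus-domination were fixed-parameter tractable, witnessed by an algorithm running in time $f(k)\cdot n^c$. Fixing $k=0$, the factor $f(0)$ is a constant, so this algorithm decides whether $\gamma^-(G)\le 0$ in time $O(n^c)$, that is, in polynomial time. This would be a polynomial-time algorithm for the zero minus-domination problem, which the preceding theorem shows to be NP-complete; hence $P=NP$. This establishes the contrapositive of the theorem: if $P\neq NP$, no fixed-parameter algorithm exists.

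The substance of the argument does not lie in this final deduction, which is a one-line application of the constant-slice principle, but in the machinery already assembled beforehand. The real work is the construction of the gadget $L$ with $\gamma^-(L)=-1$ and a unique optimal assignment, together with the disjoint-union identity $\gamma^-(G)=\gamma^-(H)-k$ for the graph $G$ obtained from $H$ by adding $k$ disjoint copies of $L$; this is precisely what lets an arbitrary weight threshold be shifted down to the constant $0$, so that the NP-hard decision ``$\gamma^-(H)\le k$'', NP-complete already for splitgraphs by Theorem~\ref{thm NP-c}, collapses onto a single fixed value of the parameter. The only point that needs care is confirming that $0$ is a legitimate fixed value of the parameter despite weights being permitted to go negative; once that is granted, the impossibility of a fixed-parameter algorithm is forced.
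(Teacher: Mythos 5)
Your proposal is correct and follows exactly the paper's own argument: assume a fixed-parameter algorithm with running time $f(k)\cdot n^c$ for deciding $\gamma^-(G)\le k$, fix the slice $k=0$ to obtain a polynomial-time algorithm for the zero minus-domination problem, and conclude $P=NP$ from its NP-completeness (which, as you rightly note, is where the real work — the gadget $L$ and the disjoint-union shift — resides). No discrepancy with the paper's proof.
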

\begin{proof}
Assume there exists an algorithm which runs in time 
$O(f(k)\cdot n^c)$ and that determines whether a graph $G$ has a 
minus domination of weight at most $k$. 
Then the zero minus-domination problem would be solvable 
in polynomial time.  
\qed\end{proof}

\appendix 

\section{Distance-hereditary graphs}
%%%%%%%%%%%%%%%%%%%%%%%%%%%%%%%%%%%%

Distance-hereditary graphs are the graphs of rankwidth one 
(see, eg,~\cite{kn:kloks2}). They were introduced in 1977 by Howorka as 
those graphs in which, for every pair of nonadjacent vertices, 
all the cordless paths between them have the same length. 
They have a decomposition tree $(T,f)$ where $T$ is a rooted 
binary tree and $f$ is a bijection from the vertices to the leaves of 
$T$. For each branch, the `twinset' of that branch 
is defined as those vertices in 
the leaves that have neighbors in leaves outside that branch. 
Each twinset induces a cograph. Each internal node of $T$ is 
labeled as $\oplus$ or $\otimes$. When the label is $\otimes$ 
then all the vertices in the twinset of the left branch 
are adjacent to all the vertices in the twinset of the right branch. 
When the label is $\oplus$ there are no edges between vertices mapped to 
different branches. 
The twinset of a parent is either empty, or the twinset of one 
of the two children or the union of the twinsets at the two 
children. 

\begin{theorem}
\label{thm DH}
There exists a polynomial algorithm that computes 
$\gamma^-$ for distance-hereditary graphs. 
\end{theorem}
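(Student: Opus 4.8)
The plan is to extend the dynamic programming of Theorem~\ref{thm cograph} from cotrees to the decomposition tree $(T,f)$ of a distance-hereditary graph, processing $T$ bottom-up. The one essential new feature is the twinset. When the cograph recursion reaches a join node, \emph{every} vertex of one branch becomes adjacent to every vertex of the other, so a single number $t$ records the external contribution to all vertices of the branch uniformly. In a distance-hereditary graph a $\otimes$-node only connects the two \emph{twinsets}, so the external contribution reaches the twinset vertices of a branch but not its interior vertices. I would therefore enrich the state of $\zeta$ by one parameter recording the net weight of the current twinset.

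Concretely, for a branch $W$ with twinset $S$ I would define $\zeta(t,a,b,c,w)$ as the maximum number of vertices $x\in W$ that are satisfied, taken over all $(a,b,c)$-functions $f$ on $W$ whose twinset weight equals $f(S)=w$, where $x\in W\setminus S$ counts as satisfied when $f(N[x]\cap W)>0$ and $x\in S$ counts as satisfied when $f(N[x]\cap W)+t>0$. Here $w=f(S)$ is the net weight that this branch contributes to a sibling through a future join, while $t$ is the contribution the branch receives from outside; both range over $O(n)$ integer values, so with $a+b+c=|W|$ the table has $O(n^4)$ entries per node, and the whole computation stays polynomial.

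The recurrences split according to the label of the node ($\oplus$ or $\otimes$) combined with the structural description of the parent twinset ($\es$, the twinset of one child, or the union of both). For a $\otimes$-node with $S=S_1\cup S_2$ one has $a=a_1+a_2$, $b=b_1+b_2$, $c=c_1+c_2$ and $w=w_1+w_2$, and the external parameter handed down to each child is shifted by the net weight of the sibling's twinset, giving
\[
\zeta(t,a,b,c,w)=\max\;\{\;\zeta_1(t+w_2,a_1,b_1,c_1,w_1)+\zeta_2(t+w_1,a_2,b_2,c_2,w_2)\;\},
\]
the maximum being over all valid splits. For a $\oplus$-node the sibling shift disappears. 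When the parent twinset is the twinset of only one child, the other child's twinset vertices acquire no further neighbors above this node, so their external parameter is fixed here (to the sibling's twinset weight at a join, or to $0$ at a union) and their satisfaction is frozen at this node.

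The main obstacle is the bookkeeping rather than any single hard idea: I must verify that a vertex leaving the twinset never again acquires a neighbor higher in $T$, so that its satisfaction is finalized exactly once and with the correct external offset; that $t$ is translated to each child precisely as the node label and the parent twinset dictate; and that $w$ composes consistently in all of these cases. Once these case checks are in place, reading the table at the root (where $S=\es$, so that $t=0$ and $w=0$ and every vertex is interior) via $\gamma^{-}(G)=\min\;\{\,-a+c\mid a+b+c=n,\ \zeta(0,a,b,c,0)=n\,\}$, exactly as in~\eqref{eqn2}, produces $\gamma^{-}$ in polynomial time.
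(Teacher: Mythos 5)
Your proposal is correct and follows essentially the same route as the paper: a bottom-up dynamic program over the distance-hereditary decomposition tree that extends the cograph recursion of Theorem~\ref{thm cograph}, where the external offset $t$ is applied only to the twinset vertices of a branch and is shifted by the sibling twinset's weight at a $\otimes$-node. The only difference is bookkeeping: the paper restricts the counts $(a,b,c)$ to the twinset and makes satisfaction of the interior vertices a hard constraint in the definition of an $(a,b,c)$-function, whereas you keep global counts over all of $W$ and add an explicit twinset-weight coordinate $w$ --- an equivalent formulation that, if anything, makes the recovery of the weight $-a+c$ at the root more transparent.
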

\begin{proof}
Consider a branch $B$ and let $W$ be the set of vertices that 
are mapped to leaves of $B$. Let $Q$ be the twinset of $B$, 
that is, the set of vertices in $W$ that have neighbors in 
$V \setminus W$. 

\medskip 

\noindent
An $(a,b,c)$-function is a function $f:W \rightarrow \{-1,0,1\}$ 
such that $f$ assigns to $a$ vertices of $Q$ the value $-1$, 
to $b$ vertices of $Q$ the value $0$ and to $c$ vertices of $Q$ 
the value $1$. Furthermore, 
\begin{equation}
\label{eqn5}
\text{for all $x \in W \setminus Q$}\quad f(N[x]) > 0.
\end{equation}

\medskip 

\noindent
For an integer $t$ let $\zeta(t,a,b,c)$ be defined as  
\begin{multline}
\label{eqn6} 
\zeta(t,a,b,c)= \max\;|\;\{\;x\;\mid\; x \in Q \quad  
\text{and}\quad f(N[x] \cap W) + t > 0 \quad \text{and}\\
\text{where $f$ is an $(a,b,c)$-function}\;\}\;|.
\end{multline}
It is a nice, easy exercise to see that the arguments given 
in the proof of Theorem~\ref{thm cograph} extend to show 
that these definitions lead to an efficient computation of 
$\gamma^-$ for distance-hereditary graphs. For brevity we omit the 
details. 
\qed\end{proof} 


\begin{thebibliography}{99}
%%%%%%%%%%%%%%%%%%%%%%%%%%%

\bibitem{kn:alber}Alber,~J., H.~Bodlaender, H.~Fernau, T.~Kloks 
and R.~Niedermeier, 
Fixed-parameter algorithms for dominating set and related 
problems on planar graphs, 
{\em Algorithmica\/} {\bf 33} (2002), pp.~461--493. 

\bibitem{kn:alon}Alon,~N. and S.~Gutner, 
Linear time algorithms for finding a dominating set 
of fixed size in degenerated graphs, 
{\em Algorithmica\/} {\bf 54} (2009), pp.~544--556. 

\bibitem{kn:damaschke}Damaschke,~P., 
Minus domination in small-degree graphs, 
{\em Proceedings WG'98\/}, Springer-Verlag, LNCS 1517 (1998), pp.~17--25. 

\bibitem{kn:dunbar3}Dunbar,~J., W.~Goddard, S.~Hedetniemi, 
M.~Henning and A.~McRae, 
The algorithmic complexity of minus domination in graphs, 
{\em Discrete Applied Mathematics\/} {\bf 68} (1996), 
pp.~73--84. 
 
\bibitem{kn:farber2}Farber,~M., 
Domination, independent domination, and duality in strongly 
chordal graphs, 
{\em Discrete Applied Mathematics\/} {\bf 7} (1984), pp.~115--130. 

\bibitem{kn:fomin2}Fomin,~F., D.~Lokshtanov, S.~Saurabh and 
D.~Thilikos, 
Linear kernels for (connected) dominating set on graphs with 
excluded topological subgraphs, 
{\em Proceedings STACS'13\/}, Schloss Dagstuhl-Leibniz-Zentrum 
f\"ur Informatik, LPIcs {\bf 20} (2013), pp.~92--103. 

\bibitem{kn:frank}Frank,~A., 
Some polynomial algorithms for certain graphs and 
hypergraphs, 
{\em Proceedings $5^{\mathrm{th}}$ British Combinatorial Conference 
1975\/}, (Eds. C.~Nash-Williams and J.~Sheehan), 
Congressus Numeratium XV, pp.~211--226. 
 
\bibitem{kn:fulkerson}Fulkerson,~D., A.~Hoffman and R.~Oppenheim, 
On balanced matrices, 
{\em Mathematical Programming Study\/} {\bf 1} (1974), pp.~120--132. 
 
\bibitem{kn:hoffman}Hoffman,~A., A.~Kolen and M.~Sakarovitch, 
Totally-balanced and greedy matrices, 
{\em Siam Journal on Algebraic and Discrete Methods\/} {\bf 6} 
(1985), pp.~721--730. 

\bibitem{kn:kloks}Kloks,~T., 
{\em Treewidth -- Computations and Approximations\/}, 
Springer-Verlag, LNCS 842, 1994. 

\bibitem{kn:kloks3}Kloks,~T., C.~Liu and S.~Poon, 
Feedback vertex set on chordal bipartite graphs. 
Manuscript on arXiv: 1104.3915, 2012. 

\bibitem{kn:kloks2}Kloks,~T. and Y.~Wang, 
{\em Advances in graph algorithms\/}. Manuscript 2013. 

\bibitem{kn:kolen}Kolen,~A., 
{\em Location problems on trees and in the rectilinear plane\/}, 
PhD Thesis, Mathematisch centrum, Amsterdam, 1982. 

\bibitem{kn:langer}Langer,~A., P.~Rossmanith and S.~Sikdar, 
Linear-time algorithms for graphs of bounded rankwidth: 
A fresh look using game theory, 
{\em Proceedings TAMC'11\/}, Springer-Verlag, LNCS 6648 
(2011), pp.~505--516. 

\bibitem{kn:lubiw}Lubiw,~A., 
{\em $\Gamma$-Free matrices\/}, 
Master's Thesis, University of Waterloo, Canada, 1982. 

\bibitem{kn:matousek}Matou\v{s}ek,~J., 
Lower bound on the minus-domination number, 
{\em Discrete Mathematics\/} {\bf 233} (2001), pp.~361--370. 

\bibitem{kn:mellor}Mellor,~A., E.~Prieto, L.~Mathieson and P.~Moscato, 
A kernelisation approach for multiple $d$-hitting set 
and its application in optimal multi-drug therapeutic 
combinations, 
{\em PLoS ONE\/} {\bf 5} (2010), e13055. 

\bibitem{kn:sawada}Sawada,~J. and J.~Spinrad, 
{F}rom a simple elimination ordering to a strong 
elimination ordering in linear time, 
{\em Information Processing Letters\/} {\bf 86} (2003), 
pp.~299--302. 

\bibitem{kn:scheinerman}Scheinerman,~E. and D.~Ullman, 
{\em Fractional graph theory\/}, Wiley, 1997. 

\bibitem{kn:thomason}Thomason,~A., 
The extremal function for complete minors,
{\em Journal of Combinatorial Theory, Series B\/} {\bf 81} (2001), 
pp.~318--338. 

\bibitem{kn:zheng2}Zheng,~Y., J.~Wang and Q.~Feng, 
Kernelization and lowerbounds of the signed domination problem, 
{\em Proceedings FAW-AAIM'13\/}, Springer-Verlag, LNCS 7924 (2013), 
pp.~261--271. 

\bibitem{kn:zheng}Zheng,~Y., J.~Wang, Q.~Feng and J.~Chen, 
FPT results for signed domination, 
{\em Proceedings TAMC'12\/}, Sprinver-Verlag, LNCS 7287 
(2012), pp.~572--583. 
 
\end{thebibliography}
\end{document}